\newtheorem{proposition}{Proposition}
\newtheorem{lemma}{Lemma}
\newtheorem{remark}{Remark}
\def\RppSD{R_{{pp}}^{\rm{SD}}}
\def\RpeSD{R_{{pe}}^{\rm{SD}}}
\def\RpeJD{R_{{pe}}^{\rm{JD}}}
\def\ReMAC{R_{{e,\rm{MAC}}}}
\def\RseSD{R_{{se}}^{\rm{SD}}}
\def\RseJD{R_{{se}}^{\rm{JD}}}
\def\Rss{R_{{ss}}}
\def\Rsec{R_{\rm{secret}}}
\def\Pmax{P_{\max}}
\def\Psmax{P_{s,\max}}
\def\hsp{{\pmb{h}_{sp}}}
\def\hse{{\pmb{h}_{se}}}
\def\hs{{\pmb{h}_{ss}}}
\def\hpp{h_{pp}}
\def\hpe{h_{pe}}
\def\hps{h_{ps}}
\def\sigmap{\sigma_p^2}
\def\sigmae{\sigma_e^2}
\def\sigmas{\sigma_s^2}
\def\wopt{{\pmb{w}_{\rm{opt}}}}
\def\vmax{\pmb{v}_{\max}}
\def\pe{{\pmb{e}}}
\def\pw{{\pmb{w}}}
\def\pZ{{\pmb{Z}}}
\begin{document}
\title{Spectrum Leasing via Cooperation for Enhanced Physical-Layer Secrecy}
\author{Keonkook~Lee,~\IEEEmembership{Member,~IEEE,}
        Chan-Byoung~Chae,~\IEEEmembership{Senior Member,~IEEE},
       Joonhyuk~Kang,~\IEEEmembership{Member,~IEEE}
\thanks{K. Lee and J. Kang are with the Department of Electrical Engineering, Korea Advanced Institute of Science and Technology, Daejeon, Korea (e-mail: klee266@kaist.ac.kr, jhkang@kaist.edu).}
\thanks{C.-B. Chae is with the School of Integrated Technology, Yonsei University, Korea (e-mail: cbchae@yonsei.ac.kr).}
\thanks{Part of this work was presented at the International Conference on Communications (ICC), 2011.}
}

\maketitle

\begin{abstract}
Spectrum leasing via cooperation refers to the possibility of primary users leasing a portion of the spectral resources to secondary users in exchange for cooperation. In the presence of an eavesdropper, this correspondence proposes a novel application of this concept in which the secondary cooperation aims at improving secrecy of the primary network by creating more interference to the eavesdropper than to the primary receiver. To generate the interference in a positive way, this work studies an optimal design of a beamformer at the secondary transmitter with multiple antennas that maximizes a secrecy rate of the primary network while satisfying a required rate for the secondary network. Moreover, we investigate two scenarios depending upon the operation of the eavesdropper: i) the eavesdropper treats the interference by the secondary transmission as an additive noise (single-user decoding) and ii) the eavesdropper tries to decode and remove the secondary signal (joint decoding). Numerical results confirm that, for a wide range of required secondary rate constraints, the proposed spectrum-leasing strategy increases the secrecy rate of the primary network compared to the case of no spectrum leasing.
\end{abstract}

\begin{IEEEkeywords}
Physical layer security, secrecy capacity, spectrum leasing, Pareto boundary, power gain region.
\end{IEEEkeywords}

%%%%%%%%%%%%%%%%%%%%%%%%%%%%%%%%%%%%%%%%%%%%%%%%%%%%%
%% Section intro
\section{Introduction}\label{sec:intro}

Cognitive radio networks make efficient use of the spectrum by allowing the coexistence of secondary devices in a bandwidth occupied by primary networks \cite{Goldsmith08ProcCR, Sriram08JSTSP, KK11TWC}. Among proposals for the implementation of the cognitive radio, \cite{Osvaldo08JSAC, Tariq11TVT, Igor10TCOM, Davide11TWC} have proposed a spectrum-leasing framework whereby the primary network leases part of the spectral resources to secondary users in exchange for cooperation. Such previous work has considered a scenario where secondary nodes provide cooperation in the form of relaying primary packets in return for the possibility of transmitting their own data in leased spectral resources. The primary system benefits by obtaining achievable rates \cite{Osvaldo08JSAC, Tariq11TVT} or reliability \cite{Igor10TCOM, Davide11TWC}. This correspondence explores an alternative application of the concept of spectrum leasing via cooperation. The main idea is that, if spectrum access is allowed by the primary network, the secondary transmission creates more interference to the eavesdropper than to the primary receiver; thus the secrecy rate of the primary link increases.

The authors in \cite{Wyner75BSTJ, Csiszar78TIT, Bloch08TIT} studied the secrecy capacity in wiretap channel. In an effort to improve the secrecy rate, the authors in \cite{Tang11TIT, Jorswieck10SCC, Goel05VTC, Zhou10TVT} studied an information theoretic analysis of secrecy capacity in the presence of a helper node where the sole role of the helper node was that of increasing the main link's secrecy rate. In this correspondence, we study a design of the secondary transmitter where the secondary transmitter works as a helper node for secrecy rate of the primary network and transmits a message for its own network at the same time. \cite{Wu11TIFS} introduced a similar idea that studies the impact of an interaction between a primary user and a secondary user when all nodes were equipped with a single antenna.\footnote{Note that the our idea of the spectrum leasing via cooperation for secrecy of primary networks was first introduced in our previous work \cite{KK11ICC}.} In our work, we study the case of a multiple antennas and analyze an optimal beamforming vector at the secondary transmitter. The main contributions of this work are: i) the proposal of a spectrum-leasing scheme via cooperation for enhancement of the secrecy rate of the primary network, ii) an analysis of an optimal beamforming vector maximizing a primary secrecy rate while satisfying a required secondary rate. 
In our previous work \cite{KK11ICC}, we have introduced a design of the secondary transmitter when the number of transmit antennas is assumed to be more than three and the eavesdropper treats the interference from the secondary transmitter as an additive noise. In this work, we study two scenarios depending upon the operation of the eavesdropper, i.e., a single-user decoding eavesdropper and a joint decoding eavesdropper. Moreover, we investigate a design of optimal beamforming vectors irrespective of the number of antennas. Finally, we demonstrate that the proposed spectrum-leasing scheme improves the secrecy rate of the primary network for a wide range of secondary rate constraints.

% notations
\emph{Notation}:
%matrix and vector
Lower case and upper case boldface denote vectors and matrices, respectively.
%$[\cdot]^{-1}$, $[\cdot]^T$ and $[\cdot]^*$ denote inverse, transpose and conjugate transpose, respectively.
$[\cdot]^*$ denotes conjugate transpose.
%$\left|\pmb{A}\right|$ represents the determinant of $\pmb{A}$.
$\left\|\pmb{a}\right\|$ denotes the Euclidean norm of $\pmb{a}$.
%distribution
$[A]^+$ denotes $\max(A,0)$.
$\pmb{z} \sim \mathcal{CN} \left(\pmb{m}, \pmb{V}\right)$ denotes that
the elements of $\pmb{z}$ are random variables with
the distribution of a circularly-symmetric-complex-Gaussian (CSCG)
with mean $\pmb{m}$ and covariance $\pmb{V}$.

%%%%%%%%%%%%%%%%%%%%%%%%%%%%%%%%%%%%%%%%%%%%%%%%%%%%%
%% Section system model
\section{System Model}\label{sec_sm}

The system model under consideration in this correspondence is illustrated in Fig. \ref{fig:sm}. The system consists of a primary transmitter, a primary receiver, a passive eavesdropper, a secondary transmitter, and a secondary receiver. The secondary transmitter has multiple antennas, $N_t$, and all other nodes have a single antenna.\footnote{In \cite{Wu11TIFS}, the case of single antenna at the secondary transmitter was studied with a similar concept. When the secondary transmitter has a single antenna, the strategy at the secondary transmitter is limited to the power adjustment. In our work, we assume that the secondary transmitter has multiple antennas, i.e., $N_t \ge 2$ and study a design of an optimal beamforming vector.}
We assume that channel gains directly connected to nodes are available by exploiting the reciprocity of channels. For example, the secondary transmitter perfectly knows $\pmb{h}_{sp}, \pmb{h}_{se}, \pmb{h}_{ss}$ and the receivers have the relevant receiver-side channel state information. Moreover, the secondary transmitter uses a beamforming strategy, i.e., a scalar coding strategy that results in a unit-rank input covariance matrix. In this model, we study a design of the beamforming vector at the secondary transmitter which aims at improving the security of the primary network. 

The problem of spectrum leasing via cooperation for enhanced physical-layer secrecy can be formulated as the maximization of the secrecy rate of the primary network subject to the power constraint and the quality of service (QoS) constraint of the secondary network. The latter is given by imposing that the achievable rate of the secondary network is larger than a given threshold. If the power constraints are given by $P_p$ for the primary link and $\Psmax$ for the secondary link and $\pmb{w}$ denotes the beamforming vector at the secondary transmitter, the problem of the proposed idea corresponds to the following formulation:
\begin{equation}
\begin{array}{*{20}c}\label{PB1}
   {\mathop {\max }\limits_{{\pmb{w}}} } & {\Rsec \left(\pmb{w}\right)}  \\
   {\rm{s.t.}} & {\left\| {{\pmb{w}} } \right\|^2 \le \Psmax} \\
                &~~~ \Rss\left(\pmb{w}\right) \ge R_{\min},
 \end{array}
\end{equation}
where $R_{\min}$ is the required rate given to the secondary user in exchange for cooperation. The achievable rate for the secondary link is given by
\begin{equation}\label{Rss}
R_{ss}\left(\pmb{w}\right) = \log \left( 1 + \frac{\left|
\pmb{w}^*{\pmb{h}}_{ss} \right|^2 } {\sigma_s^2  + \left| {h}_{ps}
\right|^2P_p } \right),
\end{equation}
where $\sigma_s^2$ is the noise variances at the secondary receiver. To ensure feasibility of (\ref{PB1}), one can set
\begin{equation}\label{Rmin}
R_{\min} = \alpha R_{s, \max},
\end{equation}
where $\alpha \in [0, 1]$ and $R_{s, \max}=\Rss\left(\sqrt{\Psmax}\pmb{h}_{ss}/\|\pmb{h}_{ss}\|\right) = \log \left( 1 + \frac{\Psmax\left\|{\pmb{h}}_{ss} \right\|^2 }{\sigma_s^2  + \left|{h}_{ps} \right|^2P_p } \right)$. In fact, $R_{s, \max}$ is the maximum achievable rate and corresponds to a case where the beamforming vector is chosen to maximize the secondary rate, namely as the maximum ratio transmission (MRT), matched to the channel $\pmb{h}_{ss}$. Therefore, a parameter $\alpha$ represents the QoS level requested by the secondary network from the lowest ($\alpha=0$) to the highest ($\alpha=1$). Note that $\Rsec(\pw)$ in (\ref{PB1}) can be different depending upon the operation of the eavesdropper. In this correspondence, we consider two scenarios: a single-user decoding eavesdropper scenario and a joint decoding eavesdropper scenario. In the single-user decoding scenario, we assume that the eavesdropper treat the interference from the secondary transmitter as an additive noise. We also study the case of performing a joint decoding at the eavesdropper where the eavesdropper tries to decode messages from both the primary transmitter and the secondary transmitter.

%%%%%%%%%%%%%%%%%%%%%%%%%%%%%%%%%%%%%%%%%%%%%%%%%%%%%
%% Section IV
\section{Single-user Decoding at the Eavesdropper}\label{sec_sd}
In this section, we discuss an optimal solution of the problem (\ref{PB1}) when the intended receiver and the eavesdropper treat the interference from the secondary transmitter as an additive noise. Given the assumptions, the following rate is achievable by the primary link with perfect secrecy 
\begin{equation}\label{eq_Rsec_sdsd}
\Rsec\left(\pmb{w}\right) = \left[\log \left( 1 + \frac{\left| {h}_{pp} \right|^2P_p } {\sigma_p^2  + \left| {\pmb{w}} ^* {\pmb{h}}_{sp} \right|^2 } \right) - \log \left( {1 + \frac{{\left| {h_{pe} } \right|^2P_p }}{{\sigma_e^2  + \left| {{\pmb{w}} ^* {\pmb{h}}_{se} } \right|^2 }}} \right)\right]^+,
\end{equation}
where $\pmb{w}$ denotes a beamforming vector at the secondary transmitter, $h_{ij}$ or $\pmb{h}_{ij}$ are the channel coefficient or $N_t \times 1$ channel vector between nodes, and $\sigma_p^2$ and $\sigma_e^2$ are the noise variances at the primary receiver and the eavesdropper, respectively. %We first review the concept and main results related the power gain region as introduced in \cite{Jorswieck11TSP}. Next, we present a design of an optimal beamforming vector.

\subsection{Power Gain Region}

In this part, we review the concept and main results related to a power gain region as introduced in \cite{Jorswieck11TSP}. In particular, we introduce a design of beamforming vectors to achieve Pareto boundary points of the power gain region. Assume that there are a single transmitter with $N$ antennas and $K$ receivers with a single antenna. For beamforming transmission strategies, an achieved power gain at the $k$-th receiver is defined as
\begin{equation}
x_{k}(\pmb{w}) = \left|\pmb{w}^*\pmb{h}_{k}\right|^2,
\end{equation}
where $\pmb{w}$ is a beamforming vector at the transmitter and $\pmb{h}_{k}$ is a channel vector between the transmitter and the $k$-th receiver, $k \in \mathcal{K}$, $\mathcal{K} = \left\{1, 2, \ldots, K \right\}$. Then, a power gain region with a transmit power constraint, i.e., $\left\|\pmb{w}\right\|^2 \le \Pmax$, is defined as a set of all achievable power gains as follows:
\begin{equation}
\Omega = \left\{\pmb{x}(\pw) = \left(x_{1}(\pmb{w}), x_{2}(\pmb{w}), \ldots, x_{K}(\pmb{w})\right)| \left\|\pmb{w}\right\|^2 \le \Pmax \right\},
\end{equation}
and, given $\pmb{e}$ where $e_i \in \{-1, +1\}$, the outer boundary of the power gain region in direction $\pmb{e}$ is defined as follows:
\begin{equation}\label{eq_outbound}
\mathcal{B}^{\pmb{e}}\Omega = \left\{\pmb{x}'| \pmb{x}'\ge^{\pmb{e}} \pmb{x} %{\text{ for }} 
, \forall \pmb{x} \in \Omega, \pmb{x}' \in \Omega \right\},
\end{equation}
where $\pmb{a} \ge^{\pmb{e}} \pmb{b}$ if $a_ie_i \ge b_ie_i$ and $\pmb{x} = \left[x_{1}(\pmb{w}), x_{2}(\pmb{w}), \ldots, x_{K}(\pmb{w}) \right]^T$. In Fig.~\ref{fig_PGR}, we give an example to show the power gain region when $(N, K) = (2, 2)$. Depending on $\pmb{e}$, there are three different power gain regions. In particular, if we set $\pmb{e} = \left[+1 +1\right]$, we get $\mathcal{B}^{\pmb{e}}\Omega$ that is shown in the upper right part of the boundary. When $\pmb{e} = \left[+1 -1\right]$, $\mathcal{B}^{\pmb{e}}\Omega$ is drawn as the right-lower part and likewise for $\pmb{e} = \left[-1 +1\right]$.
For a design of $\pmb{e}$, $e_i$ is set to $+1$ if the $i$-th receiver is an intended receiver, while $e_j = -1$ if the $j$-th receiver is an unintended receiver. By doing so, $\mathcal{B}^{\pmb{e}}\Omega$ becomes a set of Pareto optimal points of the power gain region in direction to maximize its power gain at intended receivers and/or minimizes its power gain at unintended receiver \cite{Jorswieck11TSP}.

\begin{lemma}\label{lemma_pg1}
On the assumption that channels are linearly independent, all the points $\pmb{x}(\pmb{w})$ on the outer boundary of $\Omega$ in direction $\pmb{e}$, $\mathcal{B}^{\pmb{e}}\Omega$, can be achieved by using $\pw$ as follows:
\begin{equation}\label{eq_pg_bound}
\pw = \sqrt{P} \pmb{v}_{\max}\left\{\pZ\right\},
\end{equation}
where $\vmax(\pZ)$ is the eigenvector with unit norm corresponding to the maximum eigenvalue of $\pZ$ and 
\begin{equation}
\pmb{Z} = \sum\limits_{k = 1}^{K} \mu_{k}e_{k}\pmb{h}_{k}\pmb{h}_{k}^*,
\end{equation}
for some $\mu_{k}$ such that $\mu_{k} \in \left[0, 1\right], \sum\limits_{k=1}^{K} \mu_{k} = 1$,
%. If $N \ge K$, $P$ is chosen as $\Pmax$ and, if $N < K$, 
where $P$ is chosen as follows:
\begin{equation}\label{eq_pg_bound_power}
P = 
\left\{ {\begin{array}{ll}
   {\Pmax } &    {, \lambda_{\max}(\pmb{Z}) > 0 \;\text{or}\; N \ge K},  \\
   {[0 \;\; \Pmax ]}  &    {, \lambda_{\max}(\pmb{Z}) = 0}, \\
   0  &    {, \lambda_{\max}(\pmb{Z}) < 0}.  \\
 \end{array} } \right.
\end{equation}
\end{lemma}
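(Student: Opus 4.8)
The plan is to characterize $\mathcal{B}^{\pe}\Omega$ by \emph{scalarization}: a point $\pmb{x}(\pw)$ is Pareto optimal in direction $\pe$ precisely when $\pw$ maximizes a weighted combination of the signed power gains. The key observation is that for any weights in the simplex ($\mu_k\in[0,1]$, $\sum_k\mu_k=1$),
\begin{equation}
\sum_{k=1}^{K}\mu_k e_k x_k(\pw)=\pw^*\Big(\sum_{k=1}^{K}\mu_k e_k \pmb{h}_k\pmb{h}_k^*\Big)\pw=\pw^*\pZ\pw ,
\end{equation}
so the weighted objective is exactly the quadratic form of $\pZ$. I would first solve, for fixed $\pZ$, the inner problem $\max_{\|\pw\|^2\le\Pmax}\pw^*\pZ\pw$, and then argue that sweeping $\mu_k$ over the simplex traces out the whole boundary.

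The inner problem is a Rayleigh-quotient maximization. Writing $\pw=\sqrt{P}\,\hat{\pw}$ with $\|\hat{\pw}\|=1$, the objective is $P\,\hat{\pw}^*\pZ\hat{\pw}$, whose maximum over $\hat{\pw}$ is $P\,\lambda_{\max}(\pZ)$, attained at $\hat{\pw}=\vmax(\pZ)$; this already forces the beamforming direction to be the dominant eigenvector, giving the form (\ref{eq_pg_bound}). Maximizing the remaining scalar $P\,\lambda_{\max}(\pZ)$ over $P\in[0,\Pmax]$ then produces the three regimes of (\ref{eq_pg_bound_power}): full power when $\lambda_{\max}(\pZ)>0$, indifference (any $P\in[0,\Pmax]$) when $\lambda_{\max}(\pZ)=0$, and zero power when $\lambda_{\max}(\pZ)<0$. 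The clause $N\ge K$ in the first line of (\ref{eq_pg_bound_power}) I would treat separately: since the channels are linearly independent and $N\ge K$, there is a unit vector orthogonal to all unintended channels that still has strictly positive gain on an intended channel, so full power can always be loaded without raising any unintended gain, making $P=\Pmax$ boundary-achieving even in the degenerate $\lambda_{\max}(\pZ)=0$ situation.

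The easy half is that any $\pw$ of the stated form lands on $\mathcal{B}^{\pe}\Omega$: if some $\pw'$ satisfied $\pmb{x}(\pw')\ge^{\pe}\pmb{x}(\pw)$ with strict inequality in a coordinate carrying positive weight, then $\pw'$ would yield a strictly larger $\pw^*\pZ\pw$, contradicting optimality of the eigen-solution. The substantive half is the converse, that \emph{every} boundary point is generated by some $\pZ$ in the simplex family. For this I would fix a boundary point $\pmb{x}^\star=\pmb{x}(\pw^\star)$, single out one intended receiver as the objective, and write the interference-temperature program of maximizing its gain subject to $\|\pw\|^2\le\Pmax$, to holding every other intended gain at $x_k^\star$, and to capping every unintended gain at $x_j^\star$; Pareto optimality forces $\pw^\star$ to solve this program. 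Imposing stationarity of the Lagrangian yields $\big(\sum_k \mu_k e_k\pmb{h}_k\pmb{h}_k^*\big)\pw^\star=\lambda\pw^\star$, with the nonnegative multipliers $\mu_k$ serving (after normalization) as the simplex weights and the signs $e_k=\pm1$ attached to intended/unintended receivers, so $\pw^\star$ is an eigenvector of a $\pZ$ of exactly the required form, and complementary slackness on $\|\pw\|^2\le\Pmax$ recovers (\ref{eq_pg_bound_power}).

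The main obstacle lives entirely in this converse step and is twofold. First, $\Omega$ is not convex, so I cannot invoke a supporting-hyperplane argument to pass from a boundary point to a global scalarization; I would instead have to justify the KKT conditions for a genuinely nonconvex program and exploit the rank-one channel structure, as in \cite{Jorswieck11TSP}, to ensure a Pareto point is a regular stationary point. Second, and most delicately, stationarity only certifies that $\pw^\star$ is \emph{some} eigenvector of $\pZ$, whereas (\ref{eq_pg_bound}) demands the \emph{dominant} one; closing this gap is equivalent to upgrading "$\pw^\star$ is a stationary point of the interference-temperature program" to "$\pw^\star$ is a global maximizer of $\pw^*\pZ\pw$" (which, by the Rayleigh argument above, is automatically the dominant eigenvector). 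I expect this upgrade — ruling out non-dominant eigenvectors by a replacement argument that shows $\sqrt{\Pmax}\,\vmax(\pZ)$ strictly improves the objective gain while respecting the held and capped constraints — to be the crux of the proof.
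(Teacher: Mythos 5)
You should know at the outset that the paper does not actually prove this lemma: its entire ``proof'' is a one-sentence deferral to \cite{Jorswieck11TSP}. So there is no in-paper argument for your attempt to match, and your proposal must stand on its own as a proof of the statement --- which it does not. The lemma's whole content is the converse inclusion that you leave open: \emph{every} point of $\mathcal{B}^{\pe}\Omega$ is of the form $\sqrt{P}\,\vmax\{\pZ\}$ for simplex weights $\mu_k$ and $P$ as in (\ref{eq_pg_bound_power}). Your KKT route gives, at best (and only after verifying a constraint qualification for a non-convex quadratically constrained program, which you do not do), that an optimal $\pw^\star$ satisfies $\pZ\pw^\star=\lambda\pw^\star$ for \emph{some} eigenvalue $\lambda$ and some sign-consistent multipliers; nothing you establish excludes a non-dominant eigenvector (a saddle direction of the quadratic form satisfies everything you have written), and you explicitly concede this promotion is ``the crux'' which you ``expect'' to work. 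Expecting the central step to work is not proving it. (A smaller looseness: your ``easy half'' fails when some $\mu_k=0$, since scalarization with zero weights yields only weakly Pareto points; but that direction is not what the lemma asserts anyway.)

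The deeper issue is that the obstacle you place at the center --- ``$\Omega$ is not convex, so no supporting-hyperplane argument'' --- is the wrong diagnosis, at least in the regimes this paper and the cited reference actually use ($K\le 3$), and it is precisely what drives you into the KKT dead end. The sphere image $E=\{(|\pw^*\pmb{h}_1|^2,\ldots,|\pw^*\pmb{h}_K|^2):\|\pw\|=1\}$ is a joint numerical range of Hermitian forms: it is convex for $K=2$ (Toeplitz--Hausdorff) and for $K=3$ with $N\ge 3$ (Au-Yeung--Poon); and for $N=2$, $K=3$ the ball image $\Omega=\bigcup_{t\in[0,1]}tE$ equals $\mathrm{conv}\left(E\cup\{0\}\right)$, because every point of the solid ellipsoid $\mathrm{conv}(E)$ lies on a segment from the origin to a point of $E$ --- so $\Omega$ is convex there too, and this conical portion of its boundary is exactly why the power-adjustment clause $P\in[0\;\Pmax]$ for $\lambda_{\max}(\pZ)=0$ appears. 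Once convexity of $\Omega$ is established, every boundary point in direction $\pe$ admits a supporting hyperplane whose normal can be chosen with the sign pattern of $\pe$; normalizing that normal produces the simplex weights, the boundary point then \emph{globally} maximizes $\pw^*\pZ\pw$ over the ball, and your own Rayleigh-quotient computation forces $\pw=\sqrt{P}\,\vmax\{\pZ\}$ with $P$ as in (\ref{eq_pg_bound_power}). In short: the dominant-eigenvector upgrade you could not close via stationarity comes for free from scalarization of a convex set, and proving (or properly citing) the convexity of $\Omega$ is the missing idea that separates your plan from a proof.
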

\begin{proof}
The detailed proof of \textit{Lemma} \ref{lemma_pg1} is shown in \cite{Jorswieck11TSP}.
\end{proof}

Note that, if $N \ge K$, all boundary points are achieved only by $P=\Pmax$. In this case, if the transmitter has an extra power, the extra power can be used to increase power gains at some intended receivers or decrease power gains at some unintended receivers. The points marked by star in Fig.~\ref{fig_PGR} are examples of this situation.
On the other hand, when $N < K$, $\pZ$ may have $\lambda_{\max}(\pZ) \le 0$ and some boundary points are achieved by power adjustment. For instance, if $(N, K) = (3, 4)$ and there are three unintended receivers and one intended receiver, the beamforming vector to enforce zero power gains to two unintended receivers has only one degree of freedom. At the same time, the beamforming vector has two conflict goals such that decreasing a power gain at the last unintended receiver and increasing a power gain at the intended receiver. In this case, the only way to achieve outer boundary points is power adjustment.

\subsection{A Design of an Optimal Beamforming Vector}

In this part, we study a design of an optimal beamforming vector as a solution of problem (\ref{PB1}) with $\Rsec$ given in (\ref{eq_Rsec_sdsd}). Suppose that $P_s$ indicates a used power at the secondary transmitter and the power constraint at the secondary transmitter is given as $\Psmax$, i.e., $\|\pw\|^2 \le \Psmax$.

\begin{proposition}\label{proposition_sdsd}
On the assumption that channels are linearly independent, the optimization problem (\ref{PB1}) with $\Rsec$ given in (\ref{eq_Rsec_sdsd}) can be solved by one of elements in the set as follows:
\begin{equation}\label{eq_prop_sdsd}
\pmb{w}_{\rm{opt}} \in \left\{ \pw \left| \pw = \sqrt{\Psmax} \pmb{v}_{\max}\left\{\pmb{Z}\right\} \right., \pmb{Z} = -\mu_1\hsp\hsp^* +\mu_2\hse\hse^* + \mu_3\hs\hs^*, \mu_{k} \in \left[0, 1\right], \sum_{k=1}^{3} \mu_k = 1\right\}.
\end{equation}
\end{proposition}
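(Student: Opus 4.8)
The plan is to recast problem (\ref{PB1}) as an optimization over the three-receiver power gain region of \emph{Lemma} \ref{lemma_pg1} and then show that an optimum must sit on its Pareto boundary. First I would note that both the objective $\Rsec(\pw)$ in (\ref{eq_Rsec_sdsd}) and the constraint $\Rss(\pw)$ in (\ref{Rss}) depend on $\pw$ only through the three real power gains $x_1 = |\pw^*\hsp|^2$, $x_2 = |\pw^*\hse|^2$ and $x_3 = |\pw^*\hs|^2$. I therefore identify the secondary transmitter with the single transmitter of \emph{Lemma} \ref{lemma_pg1}, and the primary receiver, the eavesdropper and the secondary receiver with the $K=3$ receivers having channels $\hsp$, $\hse$, $\hs$. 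Since leakage is harmful only at the primary receiver, I assign the direction vector $\pe = [-1, +1, +1]$ (ordered as $sp$, $se$, $ss$), so that the joint goal of decreasing $x_1$ while increasing $x_2$ and $x_3$ is exactly the search for a point of $\mathcal{B}^{\pe}\Omega$.

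Next I would establish the monotonicity that forces the reduction to the boundary: differentiating (\ref{eq_Rsec_sdsd}) shows $\Rsec$ is non-increasing in $x_1$ and non-decreasing in $x_2$, while (\ref{Rss}) shows $\Rss$ is strictly increasing in $x_3$. I then argue by contradiction. If an optimal gain vector $\px^\star = (x_1^\star, x_2^\star, x_3^\star)$ were not on $\mathcal{B}^{\pe}\Omega$, then (using compactness of $\Omega$) it is dominated by a boundary point $\px'$ with $\px' \ge^{\pe} \px^\star$, i.e. $x_1'\le x_1^\star$, $x_2'\ge x_2^\star$, $x_3'\ge x_3^\star$. By the monotonicity, $\px'$ attains a secrecy rate no smaller than $\px^\star$, and since $x_3'\ge x_3^\star$ keeps $\Rss\ge\Rmin$, the point $\px'$ is feasible and hence also optimal; the outer clipping $[\cdot]^+$ is harmless, because if the bracket is non-positive the rate is zero for every feasible $\pw$. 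Thus an optimal point lies on $\mathcal{B}^{\pe}\Omega$, and invoking \emph{Lemma} \ref{lemma_pg1} with $\pe=[-1,+1,+1]$ yields $\pw = \sqrt{P}\,\vmax\{\pZ\}$ with $\pZ = -\mu_1\hsp\hsp^* + \mu_2\hse\hse^* + \mu_3\hs\hs^*$, which is exactly (\ref{eq_prop_sdsd}) apart from the power level.

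The step I expect to be the main obstacle is pinning the power to $P=\Psmax$, since rule (\ref{eq_pg_bound_power}) only forces full power when $\lambda_{\max}(\pZ)>0$ or $N_t\ge K=3$. For $N_t\ge 3$ this is immediate. For $N_t=2$ I would use the linear-independence hypothesis. Whenever $\mu_2>0$ or $\mu_3>0$, take the unit vector $\pu$ orthogonal to $\hsp$ (unique in the two-antenna case); then $\pu^*\pZ\pu = \mu_2|\hse^*\pu|^2 + \mu_3|\hs^*\pu|^2 > 0$, because linear independence forbids $\hse$ or $\hs$ from being parallel to $\hsp$, so $\lambda_{\max}(\pZ)>0$ and $P=\Psmax$. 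The only remaining case $\mu_2=\mu_3=0$ gives $\pZ=-\hsp\hsp^*$ and $\vmax\{\pZ\}$ orthogonal to $\hsp$, i.e. a zero-forcing beamformer with $x_1=0$; here both $x_2$ and $x_3$ grow with $P$, so the monotonicity of the previous step again forces the optimum to select $P=\Psmax$, completing the argument.
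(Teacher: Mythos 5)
Your proof is correct and follows essentially the same route as the paper: a contradiction/monotonicity argument showing the optimum lies on the Pareto boundary of the power gain region in direction $[-1,+1,+1]$, then invoking \emph{Lemma}~\ref{lemma_pg1}, with full power immediate for $N_t \ge 3$ and the $N_t = 2$ case handled by noting that only $\pZ = -\hsp\hsp^*$ can have $\lambda_{\max}(\pZ) \le 0$, in which case both remaining gains grow with power. Your explicit justification of that last claim (via the unit vector orthogonal to $\hsp$ and linear independence) fills in a detail the paper merely asserts, but the overall argument is the same.
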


\begin{proof}
Based on \textit{Lemma} \ref{lemma_pg1}, we see that (\ref{eq_prop_sdsd}) presents the outer boundary points of power gain region, $(|\pw^*\hsp|^2, |\pw^*\hse|^2, |\pw^*\hs|^2),$ in direction $\pe_1 = [ -1 \; +1 \; +1]$ which are obtained by $P_s = \Psmax$. Specifically, \textit{Proposition} \ref{proposition_sdsd} implies that $\wopt$ must exist on the set of boundary points where the primary receiver is an unintended receiver, while the eavesdropper and the secondary receiver are intended receivers. By contradiction, we first prove that $\wopt$ exists on the outer boundary of the power gain region. Assume that the optimal beamforming vector $\pw_1$ is not on the boundary points of the power gain region in direction $\pe_1$. Then, by definition of the boundary points of the power gain region (\ref{eq_outbound}), we can find another beamforming vector $\pw_2$ in the boundary points in direction $\pe_1$ that satisfies at least one of the following three cases.
\begin{enumerate}
\item 
$\left| {\pw_1}^*\hsp \right|^2 >
\left| {\pw_2}^* \hsp \right|^2$,
$\left| {\pw_1}^*\hse \right|^2 =
\left| {\pw_2}^* \hse \right|^2$,
$\left| {\pw_1}^*\hs \right|^2 =
\left| {\pw_2}^* \hs \right|^2$,
\item 
$\left| {\pw_1}^*\hsp \right|^2 =
\left| {\pw_2}^* \hsp \right|^2$,
$\left| {\pw_1}^*\hse \right|^2 <
\left| {\pw_2}^* \hse \right|^2$,
$\left| {\pw_1}^*\hs \right|^2 =
\left| {\pw_2}^* \hs \right|^2$,
\item
$\left| {\pw_1}^*\hsp \right|^2 =
\left| {\pw_2}^* \hsp \right|^2$,
$\left| {\pw_1}^*\hse \right|^2 =
\left| {\pw_2}^* \hse \right|^2$,
$\left| {\pw_1}^*\hs \right|^2 <
\left| {\pw_2}^* \hs \right|^2$,
\end{enumerate}
According to above results, we can find $\pw_2$ on the boundary points of the power gain region in direction $\pe_1$ such that $\Rsec(\pw_2) \ge \Rsec(\pw_1)$ and $\Rss(\pw_2) \ge \Rss(\pw_1)$ for any $\pw_1$, which is not on the boundary points. Therefore, the proof showing the existence of $\wopt$ on the boundary can be concluded.

Next, we show that $\wopt$ is obtained by $P_s = \Psmax$, irrespective of $N_t$. In \textit{Lemma} \ref{lemma_pg1}, it was shown that all outer boundary points are achieved with $P = \Pmax$ if $N \ge K$. In our system model, $N = N_t$ and $K=3$; thus, when $N_t \ge 3$, all the boundary points are achieved by $P_s = \Psmax$. If $N_t = 2$, the outer boundary points of the power gain region include the points achieved by a power adjustment. It is worth noting that, when $\pZ$ is given as (\ref{eq_prop_sdsd}), $\lambda_{\max}(\pZ) \le 0 $ occurs only if $\pZ = -\hsp\hsp^*$. If $\pZ = -\hsp\hsp^*$, $\lambda_{\max}(\pZ)$ is equal to zero and the corresponding eigenvector satisfies $\hsp^*\vmax(\pZ) = 0$. Then, the boundary points which are achieved by $\pw = \sqrt{P_s}\vmax(\pZ)$ can be written as
\begin{equation}\label{eq:pgr3}
\left(|\pw^*\hsp|^2, |\pw^*\hse|^2, |\pw^*\hs|^2 \right) = \left(0, P_s|\vmax(\pZ)^*\hse|^2, P_s|\vmax(\pZ)^*\hs|^2 \right)
\end{equation}
where $P_s \in [0 \; \Psmax]$. Interestingly, putting (\ref{eq:pgr3}) into (\ref{eq_Rsec_sdsd}) shows that $\Rsec(\pw)$ and $\Rss(\pw)$ are maximized if $P = \Psmax$. Therefore, $\wopt$ is also obtained when $P_s = \Psmax$ if $N_t = 2$.
\end{proof}

It is observed that, in \textit{Proposition} \ref{proposition_sdsd}, (\ref{eq_prop_sdsd}) includes three real numbers, but they can be obtained by the combinations of two real numbers since the sum of three numbers needs to be unity.

\section{Joint Decoding at the Eavesdropper}\label{sec_jd}

In Section \ref{sec_sd}, we assumed that the eavesdropper treated the interference from the secondary transmitter as an additive noise. In this section, we study a case where the eavesdropper performs a joint decoding. Since the eavesdropper receives a superposition of signals from the primary transmitter and from the secondary transmitter, the eavesdropper can be intelligent enough to decode messages from both the primary transmitter and the secondary transmitter. This is the worst scenario in terms of the secrecy rate and provides a lower bound of the achievable performance of the proposed idea. We investigate an achievable secrecy rate when the eavesdropper performs a joint decoding and propose a design of an optimal beamforming vector.

\subsection{Achievable Secrecy Rate with a Joint Decoding Eavesdropper}

\begin{lemma}\label{lem_rr_sdjd}
Let $R_p$ and $R_s$ denote a rate of the message at the primary transmitter and the secondary transmitter, respectively. Then, the following rate region $(R_p, R_s)$ is achievable at the eavesdropper:
\begin{equation}
\mathcal{R}^{\rm{Eve}} = \{(R_p, R_s)| (R_p, R_s) \in \mathcal{R}_{\rm{MAC}}^{\rm{Eve}} \cup \mathcal{R}_{\rm{SD}}^{\rm{Eve}}\}
\end{equation}
where
\begin{equation}
\mathcal{R}_{\rm{MAC}}^{\rm{Eve}} = 
\left\{ (R_p, R_s) \left| 
{\begin{array}{*{20}c}
 R_p \le \RpeJD  \\
 R_s \le \RseJD  \\
 R_p + R_s \le \ReMAC  \\
\end{array}} \right. \right\},
\end{equation}
\begin{equation}
\mathcal{R}_{\rm{SD}}^{\rm{Eve}} = \left\{ (R_p, R_s) |  R_p \le \RpeSD \right\},
\end{equation}
%\begin{eqnarray*}
%\RpeJD &=& \log \left( 1 + \frac{\left| \hpe \right|^2 P_p} {\sigmae } \right),\\
%\RseJD &=& \log \left( 1 + \frac{\left| \pw^*\hse \right|^2} {\sigmae } \right),\\
%\ReMAC &=& \log \left( 1 + \frac{\left| \hpe \right|^2 P_p + \left| \pw^*\hse \right|^2}{\sigmae} \right).
%\end{eqnarray*}
%\begin{equation*}
%\RpeJD = \log \left( 1 + \frac{\left| \hpe \right|^2 P_p} {\sigmae } \right),\;\;
%\RpeSD = \log \left( 1 + \frac{\left| \hpe \right|^2 P_p} {\sigmae  + \left| \pw^*\hse \right|^2} \right),
%\end{equation*}
%\begin{equation*}
%\RseJD = \log \left( 1 + \frac{\left| \pw^*\hse \right|^2} {\sigmae } \right),\;\;
%\ReMAC = \log \left( 1 + \frac{\left| \hpe \right|^2 P_p + \left| \pw^*\hse \right|^2}{\sigmae} \right).
%\end{equation*}
\begin{align*}
&\RpeJD = \log \left( 1 + \frac{\left| \hpe \right|^2 P_p} {\sigmae } \right),
&&\RseJD = \log \left( 1 + \frac{\left| \pw^*\hse \right|^2} {\sigmae } \right),\\
&\ReMAC = \log \left( 1 + \frac{\left| \hpe \right|^2 P_p + \left| \pw^*\hse \right|^2}{\sigmae} \right),
&&\RpeSD = \log \left( 1 + \frac{\left| \hpe \right|^2 P_p} {\sigmae  + \left| \pw^*\hse \right|^2} \right).
\end{align*}
\end{lemma}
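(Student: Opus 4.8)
The plan is to recognize the eavesdropper's observation as a two-user Gaussian multiple-access channel and to characterize exactly those rate pairs for which the eavesdropper can reliably recover the \emph{primary} message. First I would write the signal seen by the eavesdropper as the superposition $y_e = \hpe\sqrt{P_p}\,x_p + (\pw^*\hse)\,x_s + n_e$, where $x_p,x_s$ are unit-power Gaussian codewords and $n_e\sim\mathcal{CN}(0,\sigmae)$. The effective received powers are then $|\hpe|^2 P_p$ for the primary stream and $|\pw^*\hse|^2$ for the beamformed secondary stream, which are precisely the quantities entering the four rate expressions in the statement.

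The eavesdropper has two qualitatively different ways to obtain the primary message, and the claimed region is the union of what each achieves. For the first strategy (joint decoding), I would invoke the classical capacity region of the two-user Gaussian MAC: both codewords are simultaneously decodable, so in particular $x_p$ is recovered, if and only if $R_p\le\RpeJD$, $R_s\le\RseJD$, and $R_p+R_s\le\ReMAC$, which is exactly $\mathcal{R}_{\rm{MAC}}^{\rm{Eve}}$. Achievability follows from simultaneous joint-typicality decoding (equivalently, time-sharing between the two successive-cancellation corner points), and the matching converse from Fano's inequality applied to the two single-user cuts and the sum-rate cut.

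For the second strategy (single-user decoding) I would let the eavesdropper abandon decoding of the secondary stream and treat $(\pw^*\hse)x_s$ as additional Gaussian noise. The primary stream then sees a point-to-point Gaussian channel with effective noise power $\sigmae+|\pw^*\hse|^2$, so $x_p$ is decodable for \emph{any} $R_s$ provided $R_p\le\RpeSD$; this is exactly $\mathcal{R}_{\rm{SD}}^{\rm{Eve}}$, an $R_s$-unconstrained horizontal strip. Since the eavesdropper is free to select whichever strategy succeeds, the set of rate pairs for which the primary message is compromised is the union $\mathcal{R}_{\rm{MAC}}^{\rm{Eve}}\cup\mathcal{R}_{\rm{SD}}^{\rm{Eve}}$.

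The step I expect to require the most care is arguing that these two modes are \emph{exhaustive} and that their union—rather than a single larger region—is the correct object. The key observation is $\RpeSD\le\RpeJD$ whenever $|\pw^*\hse|^2>0$: joint decoding affords a strictly larger individual bound on $R_p$ but must pay for it with the constraints on $R_s$ and on $R_p+R_s$, whereas single-user decoding removes all constraints on $R_s$ at the cost of the interference-limited bound $\RpeSD$. I would make this precise by checking that the MAC pentagon governs the regime of small $R_s$, while the SD strip extends coverage into the regime of large $R_s$ where the secondary stream is no longer jointly decodable; thus neither region subsumes the other and the union is genuinely needed to describe every way the eavesdropper can extract the primary message, which in turn fixes the complementary region in which the primary link enjoys perfect secrecy.
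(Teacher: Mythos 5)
Your proposal is correct and follows essentially the same route as the paper: achievability of $\mathcal{R}_{\rm{MAC}}^{\rm{Eve}}$ via joint (MAC) decoding at the eavesdropper, achievability of $\mathcal{R}_{\rm{SD}}^{\rm{Eve}}$ by treating the beamformed secondary signal as additional Gaussian noise, and then taking the union of the two (the paper states exactly this and defers details to the Tang et al. reference). Your added discussion of exhaustiveness of the two decoding modes goes beyond what the lemma (an achievability claim) requires, but it does not affect the correctness of the core argument.
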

\begin{proof}
The eavesdropper receives the messages from both the primary transmitter and the secondary transmitter. This setup is the same as multiple access channel and $\mathcal{R}_{\rm{MAC}}^{\rm{Eve}}$ is achievable at the eavesdropper. On the other hand, since the eavesdropper is interested only in the message from the primary transmitter, the eavesdropper may treat the received signal from the secondary transmitter as an additive noise while not decoding the message from the secondary transmitter. In this case, $\mathcal{R}_{\rm{SD}}^{\rm{Eve}}$ is achievable at the eavesdropper. Finally, the union of $\mathcal{R}_{\rm{MAC}}^{\rm{Eve}}$ and $\mathcal{R}_{\rm{SD}}^{\rm{Eve}}$ is achievable at the eavesdropper. For details, refer to \cite{Tang11TIT}.
\end{proof}

We assume that the primary receiver does not perform the joint decoding; thus the achievable rate region of the primary receiver is as follows:
\begin{equation}
\mathcal{R}^{\rm{PRx}} = \mathcal{R}_{\rm{SD}}^{\rm{PRx}} = \left\{(R_p, R_s) | R_p \le \RppSD \right\},
\end{equation}
where
\begin{equation*}
\RppSD =\log \left( 1 + \frac{\left| \hpp \right|^2 P_p} {\sigmap  + \left| \pw^*\hsp \right|^2} \right).
\end{equation*}
The achievable rate region of the primary receiver and the eavesdropper are shown in Fig. \ref{fig_rr_sdjd}.
%%%

\begin{lemma}\label{lem_sdjd}
Suppose that $\mathcal{R}^{\rm{PRx}}$ and $\mathcal{R}^{\rm{Eve}}$ are the achievable rate regions of the primary receiver and the eavesdropper, respectively. Given $\mathcal{R}^{\rm{PRx}}$ and $\mathcal{R}^{\rm{Eve}}$, following rates are achievable with perfect secrecy 
\begin{equation}\label{eq_rsec_jd}
\Rsec = \left\{R| R = [R_p^1 - R_p^2, 0]^+, (R_p^1, R_s) \in \mathcal{R}^{\rm{PRx}}, (R_p^2, R_s) \notin \mathcal{R}^{\rm{Eve}}	 \right\},
\end{equation}
\end{lemma}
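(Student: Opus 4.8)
The plan is to prove achievability by a wiretap random-binning argument tailored to the present setting, in which both the primary receiver and the eavesdropper observe the secondary signal and the eavesdropper is permitted to decode it jointly. Fix the secondary rate $R_s$ and any pair satisfying $(R_p^1, R_s) \in \mathcal{R}^{\rm{PRx}}$ and $(R_p^2, R_s) \notin \mathcal{R}^{\rm{Eve}}$, and set the secret-message rate $R = [R_p^1 - R_p^2]^+$. First I would specify the primary encoder: draw $2^{nR_p^1}$ independent Gaussian codewords, partition them into $2^{nR}$ bins of size $2^{nR_p^2}$ each, and transmit a codeword chosen uniformly from bin $W$ to convey the secret message $W$. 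Thus $R_p^2$ plays the role of the randomization (dummy-message) rate, $R$ is the effective secret rate, and $R_p^1$ is the total transmitted codeword rate.

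Second, I would establish reliability at the primary receiver. Since the primary receiver treats the secondary signal as additive noise and does not jointly decode, its decoding constraint is precisely $R_p^1 \le \RppSD$, i.e., $(R_p^1, R_s) \in \mathcal{R}^{\rm{PRx}}$. Under this condition, standard joint-typicality decoding recovers the transmitted codeword---and hence the bin index $W$---with error probability vanishing as $n \to \infty$.

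The principal step, and the main obstacle, is the equivocation analysis at the eavesdropper, where I would show $\tfrac{1}{n} I(W; Z^n) \to 0$, with $Z^n$ the eavesdropper's observation over the block. Secrecy hinges on the randomization rate $R_p^2$ being rich enough that the eavesdropper cannot resolve which codeword within a bin was sent, even if the bin index $W$ were revealed to it. The admissible decoding rates of the eavesdropper are characterized by $\mathcal{R}^{\rm{Eve}} = \mathcal{R}_{\rm{MAC}}^{\rm{Eve}} \cup \mathcal{R}_{\rm{SD}}^{\rm{Eve}}$ from \textit{Lemma} \ref{lem_rr_sdjd}. Because $(R_p^2, R_s) \notin \mathcal{R}^{\rm{Eve}}$, the pair lies outside both the multiple-access and the single-user-decoding regions; hence, whether the eavesdropper attempts joint decoding of the two codewords or treats the secondary signal as noise, it cannot single out the transmitted codeword among the $2^{nR_p^2}$ candidates in the revealed bin. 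Feeding this into the standard conditional-equivocation (resolvability) bound---the same list-size, Fano-type estimate underlying \textit{Lemma} \ref{lem_rr_sdjd}, cf. \cite{Tang11TIT}---shows that the eavesdropper's residual uncertainty about the chosen codeword saturates at $n R_p^2$, which absorbs the information leaked through $Z^n$ and leaves $\tfrac{1}{n} H(W \mid Z^n) \to R$. The delicate point is that the confusion must hold against both eavesdropper strategies simultaneously, which is exactly why the secrecy condition is phrased as exclusion from the union $\mathcal{R}^{\rm{Eve}}$ rather than from a single region.

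Finally, the operator $[\cdot]^+$ handles the degenerate case $R_p^1 \le R_p^2$, in which no positive secret rate can be guaranteed, and taking the union of the achievable values $[R_p^1 - R_p^2]^+$ over all admissible pairs $(R_p^1, R_p^2)$ reproduces exactly the set $\Rsec$ in (\ref{eq_rsec_jd}), completing the achievability argument.
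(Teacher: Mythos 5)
Your codebook construction and the reliability step are standard and fine, but the heart of any wiretap achievability proof is the equivocation bound, and your version of it is stated backwards. You argue that since $(R_p^2,R_s)\notin\mathcal{R}^{\rm{Eve}}$, the eavesdropper ``cannot single out the transmitted codeword among the $2^{nR_p^2}$ candidates in the revealed bin,'' so that its ``residual uncertainty about the chosen codeword saturates at $nR_p^2$,'' and that this yields secrecy. If that premise held literally, secrecy would fail completely: since the bin index $W$ is a function of the transmitted codeword $X_p^n$, one has $H(W\mid Z^n)=H(X_p^n\mid Z^n)-H(X_p^n\mid W,Z^n)$ and $H(X_p^n\mid Z^n)=nR_p^1-I(X_p^n;Z^n)$; substituting $H(X_p^n\mid W,Z^n)=nR_p^2$ gives $I(W;Z^n)=I(X_p^n;Z^n)$, i.e., \emph{every} bit the eavesdropper extracts from its observation is information about the secret bin index. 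The correct mechanism is the opposite one: the proof must show that the eavesdropper's channel information is entirely \emph{absorbed} in resolving the within-bin dummy index and the secondary codeword, i.e., an upper bound showing that $H(X_p^n,X_s^n\mid W,Z^n)$ falls short of the full within-bin entropy $n(R_p^2+R_s)$ by essentially $I(X_p^n,X_s^n;Z^n)$. Such a bound comes from Fano/list-decoding estimates in which the eavesdropper \emph{can} narrow the bin down to a list whose size is controlled by the amount by which $(R_p^2,R_s)$ exceeds its decodable region; the hypothesis $(R_p^2,R_s)\notin\mathcal{R}^{\rm{Eve}}$ is what makes the dummy-plus-secondary randomness rich enough to soak up all of the leakage, not a guarantee that the eavesdropper stays confused.

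A second, related omission: your equivocation sketch counts only the $2^{nR_p^2}$ primary candidates, but the secondary codeword $X_s^n$ (unknown to the eavesdropper) must enter the bound jointly, and differently in different regimes --- conditioning on (stripping) $X_s^n$ when $\Rss\le\RseSD$, joint list-decoding of the pair $(X_p^n,X_s^n)$ when the MAC sum-rate constraint is the one violated, and treating $X_s^n$ as undecodable randomness only when $\Rss\ge\RseJD$. This case analysis is exactly where the union $\mathcal{R}_{\rm{MAC}}^{\rm{Eve}}\cup\mathcal{R}_{\rm{SD}}^{\rm{Eve}}$ genuinely enters and is what ultimately produces the three-branch formula (\ref{eq_rsec_sdjd}). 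For comparison, the paper does not carry out this argument at all: its proof of the lemma is a one-line deferral to \cite{Tang11TIT}, where the secondary transmitter is identified with the helping interferer and the full equivocation analysis is performed; your attempt at a self-contained proof is therefore more ambitious than the paper's, but as written its key step would not survive being made rigorous.
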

\begin{proof}
The detailed proof is provided in \cite{Tang11TIT}\footnote{Note that \cite{Tang11TIT} studied an information theoretic analysis of wire-tap channel with a helping interferer. However, since the secondary transmitter works like a helping interferer in our system, the achievable rate with perfect secrecy can still be proved in the same way.}.
\end{proof}

%%%%%%%%%%%%%%%%55

Based on $\mathcal{R}^{\rm{PRx}}$, $\mathcal{R}^{\rm{Eve}}$ and \textit{Lemma} \ref{lem_sdjd}, the following rate is achievable with perfect secrecy:
%\begin{equation}\label{eq_rsec_sdjd}
%\hspace{-1cm}
%   \Rsec  = \left\{ {\begin{array}{ll}
%   \log \left( 1 + \frac{\left| \hpp \right|^2 P_p} {\sigmap  + \left| \pw^*\hsp \right|^2} \right) - 
%   \log \left( 1 + \frac{\left| \hpe \right|^2 P_p} {\sigmae  + \left| \pw^*\hse \right|^2} \right) 
%   &, \RseJD \le \Rss,\\
%   \log \left( 1 + \frac{\left| \hpp \right|^2 P_p} {\sigmap  + \left| \pw^*\hsp \right|^2} \right) - 
%   \log \left( 1 + \frac{\left| \hpe \right|^2 P_p + \left| \pw^*\hse \right|^2}{\sigmae} \right) +
%   \log \left( 1 + \frac{\left| \pw^*\hs \right|^2 } {\sigmas  + \left| \hps \right|^2P_p } \right)
%   &, \RseSD \le \Rss < \RseJD,\\
%   \log \left( 1 + \frac{\left| \hpp \right|^2 P_p} {\sigmap  + \left| \pw^*\hsp \right|^2} \right) - 
%   \log \left( 1 + \frac{\left| \hpe \right|^2 P_p} {\sigmae } \right) 
%   &, \Rss < \RseSD,
% \end{array} } \right.
%\end{equation}
\begin{equation}\label{eq_rsec_sdjd}
   \Rsec  = \left\{ {\begin{array}{ll}
   \RppSD - \RpeSD &, \RseJD \le \Rss\\
   \RppSD - \ReMAC + \Rss &, \RseSD \le \Rss < \RseJD\\
   \RppSD - \RpeJD &, \Rss < \RseSD
 \end{array} } \right.
\end{equation}
where
%\begin{align*}
%\RppSD &= \log \left( 1 + \frac{\left| \hpp \right|^2 P_p} {\sigmap  + \left| \pw^*\hsp \right|^2} \right), &
%\ReMAC &= \log \left( 1 + \frac{\left| \hpe \right|^2 P_p + \left| \pw^*\hse \right|^2}{\sigmae} \right),\\
%\RpeSD &= \log \left( 1 + \frac{\left| \hpe \right|^2 P_p} {\sigmae  + \left| \pw^*\hse \right|^2} \right), &
%\RpeJD &=  \log \left( 1 + \frac{\left| \hpe \right|^2 P_p} {\sigmae } \right) ,\\
%\RseSD &= \log \left( 1 + \frac{\left| \pw^*\hse \right|^2} {\sigmae  + \left| \hpe \right|^2P_p} \right), &
%\RseJD &= \log \left( 1 + \frac{\left| \pw^*\hse \right|^2} {\sigmae } \right).
%\end{align*}
\begin{equation*}
\RseSD = \log \left( 1 + \frac{\left| \pw^*\hse \right|^2}  {\sigmae  + \left| \hpe \right|^2P_p} \right).
\end{equation*}
%\Rss(\pw) =  \log \left( 1 + \frac{\left|\pw^*\hss \right|^2 } {\sigmas + \left|\hps\right|^2P_p } \right),
Suppose that $\wopt_i, i\in \{1, 2, 3\}$ are solutions of (\ref{PB1}) with $\Rsec$ given in each case of (\ref{eq_rsec_sdjd}) where additional constraints of corresponding $\Rss$ are added such as
\begin{align}
\hspace{-0.5cm}\wopt_1 &= \arg \mathop {\max }\limits_{\{\pw\}} \{\RppSD - \RpeSD\},  \hspace{0.5cm} {\rm{s.t.}} \; \left\| {{\pw} } \right\|^2 \le \Psmax, \;\; \Rss \ge R_{\min}, \;\; \RseJD \le \Rss,\label{eq_wopt1}\\
\hspace{-0.5cm}\wopt_2 &= \arg \mathop {\max }\limits_{\{\pw\}} \{\RppSD - \ReMAC + \Rss \},  \hspace{0.5cm} {\rm{s.t.}} \; \left\| {{\pw} } \right\|^2 \le \Psmax, \;\; \Rss \ge R_{\min}, \;\; \RseSD \le \Rss \le \RseJD,\label{eq_wopt2}\\
\hspace{-0.5cm}\wopt_3 &= \arg \mathop {\max }\limits_{\{\pw\}} \{\RppSD - \RpeJD\},  \hspace{0.5cm} {\rm{s.t.}} \; \left\| {{\pw} } \right\|^2 \le \Psmax, \;\; \Rss \ge R_{\min}, \;\; \Rss \le \RseSD.\label{eq_wopt3}
\end{align} 
Then, the optimization problem (\ref{PB1}) with $\Rsec$ given in (\ref{eq_rsec_sdjd}) can be obtained by choosing a vector $\pw$ that leads to the largest $\Rsec$ among $\wopt_1, \wopt_2,$ and $\wopt_3$. Note that the solution with a joint decoding eavesdropper is selected from a comparison of three candidates unlike the solution with single-user decoding eavesdropper.

%%%% w_opt_i finding %%%%%
\subsection{A Design of an Optimal Beamforming Vector}

%% wopt_1
For the solution of (\ref{PB1}) with $\Rsec$ given in (\ref{eq_rsec_sdjd}), we need to find $\wopt_1, \wopt_2,$ and $\wopt_3$. Regarding $\wopt_1$, (\ref{eq_wopt1}) shows that the optimization problem is equivalent to the problem of \textit{Proposition} \ref{proposition_sdsd} with an additional constraint $\RseJD \le \Rss$. The additional constraint is rewritten as
\begin{equation}
\frac{\sigmas + \left|\hps\right|^2P_p}{\sigmae} \le \frac{\left|\pw^*\hs \right|^2}{\left| \pw^*\hse \right|^2}.
\end{equation}
To find $\wopt_1$, 
%In this case, instead of finding a new solution which considers the additional constraint, 
we use the solution provided in \textit{Proposition} \ref{proposition_sdsd} by adding the constraint numerically. Suppose that the set in \textit{Proposition} \ref{proposition_sdsd} is defined as
\begin{equation}\label{eq_prop_sdjd11}
\mathcal{S}_{1} = \left\{ \pw \left| \pw = \sqrt{\Psmax} \vmax\left\{\pZ\right\} \right., \pZ = -\mu_1\hsp\hsp^* +\mu_2\hse\hse^* + \mu_3\hs\hs^*, \mu_{k} \in \left[0, 1\right], \sum_{k=1}^{3} \mu_k = 1\right\}.
\end{equation}
Then, $\wopt_1$ can be found on the set as follows:
\begin{equation}\label{eq_prop_sdjd12}
\wopt_1 \in \left\{ \pw \left| \pw \in \mathcal{S}_{1}, \frac{\sigmas + \left|\hps\right|^2P_p}{\sigmae} \le \frac{\left|\pw^*\hs \right|^2}{\left| \pw^*\hse \right|^2} \right. \right\}.
\end{equation}
For finding $\wopt_2$, since $\Rsec(\pw)$ for $\wopt_2$ is different from that for $\wopt_1$, we establish a new solution as follows:
%%%%%%%%%%%
%% wopt_2
\begin{proposition}\label{proposition_sdjd}
On the assumption that channels are linearly independent, $\wopt_2$ can be found on the set as follows:
\begin{equation}\label{eq_prop_sdjd23}
\wopt_2 \in \left\{ \pw \left| \pw \in \mathcal{S}_{2}, \;\; \frac{\sigmas + \left|\hps\right|^2P_p}{\sigmae  + \left| \hpe \right|^2P_p} \le \frac{\left|\pw^*\hs \right|^2}{\left| \pw^*\hse \right|^2} \le \frac{\sigmas + \left|\hps\right|^2P_p}{\sigmae} \right. \right\}
\end{equation}
where $\mathcal{S}_2$ represents the solution of the optimization problem (\ref{PB1}) with $\Rsec = \RppSD - \ReMAC + \Rss$ and it can be acquired on the set as follows:
\begin{equation}\label{eq_prop_sdjd21}
\mathcal{S}_{2} = \left\{ \pw \left| \sqrt{P_s} \vmax\{\pZ\} \right., \pZ = - \mu_1\hsp\hsp^* - \mu_2\hse\hse^* + \mu_3\hs\hs^*, \mu_{k} \in \left[0, 	1\right], \sum\limits_{k=1}^{3} \mu_k = 1\right\},
\end{equation}
where  $P_s$ is chosen as follows:
\begin{equation}\label{eq_prop_sdjd22}
P_s = 
\left\{ {\begin{array}{*{20}l}
   {\Psmax } &    {, \lambda_{\max}(\pmb{Z}) > 0 \; \text{or} \; N_t \ge 3},  \\
   {[0 \;\; \Psmax ]}  &    {, \lambda_{\max}(\pmb{Z}) = 0}, \\
   0  &    {, \lambda_{\max}(\pmb{Z}) < 0}.  \\
 \end{array} } \right.
\end{equation}
\end{proposition}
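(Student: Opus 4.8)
The plan is to mirror the proof of \textit{Proposition}~\ref{proposition_sdsd}, replacing the single-user secrecy objective with $f(\pw) = \RppSD - \ReMAC + \Rss$ and tracking how its monotonicity fixes the relevant direction of the power gain region. First I would write $f$ purely as a function of the three power gains $|\pw^*\hsp|^2$, $|\pw^*\hse|^2$, $|\pw^*\hs|^2$. Because $\RppSD$ is strictly decreasing in $|\pw^*\hsp|^2$, $\ReMAC$ is strictly increasing in $|\pw^*\hse|^2$ (so $-\ReMAC$ is decreasing), and $\Rss$ is strictly increasing in $|\pw^*\hs|^2$, the objective $f$ is strictly decreasing in the first two gains and strictly increasing in the third. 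Hence the primary receiver and the eavesdropper both act as unintended receivers while the secondary receiver is the intended one, which selects the direction $\pe_2 = [-1 \; -1 \; +1]$ and, through \textit{Lemma}~\ref{lemma_pg1}, the matrix $\pZ = -\mu_1\hsp\hsp^* - \mu_2\hse\hse^* + \mu_3\hs\hs^*$ appearing in $\mathcal{S}_2$. Note that the sign of the $\hse\hse^*$ term flips relative to \textit{Proposition}~\ref{proposition_sdsd}: under joint (MAC) decoding, extra interference at the eavesdropper helps it strip the secondary signal, so jamming it is counterproductive, unlike the single-user case.

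Next I would establish that the optimizer lies on $\mathcal{B}^{\pe_2}\Omega$ by the same contradiction used in \textit{Proposition}~\ref{proposition_sdsd}: if a feasible $\pw_1$ were interior to $\Omega$, the definition (\ref{eq_outbound}) of the outer boundary produces a $\pw_2 \in \mathcal{B}^{\pe_2}\Omega$ weakly dominating it in direction $\pe_2$ with at least one strict improvement, whence $f(\pw_2) > f(\pw_1)$ and $\Rss(\pw_2) \ge \Rss(\pw_1) \ge R_{\min}$. The step I expect to be the main obstacle is that the feasible set for $\wopt_2$ carries the \emph{two-sided} constraint $\RseSD \le \Rss \le \RseJD$, not the one-sided $\Rss \ge R_{\min}$ of the earlier proof. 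Moving toward the boundary in direction $\pe_2$ raises $|\pw^*\hs|^2$ and lowers $|\pw^*\hse|^2$, hence raises the ratio $|\pw^*\hs|^2/|\pw^*\hse|^2$, so the dominating point $\pw_2$ may violate the upper bound $\Rss \le \RseJD$. I would resolve this by a local version of the argument: when the upper bound is slack at $\pw_1$, a sufficiently small step toward $\pw_2$ stays ratio-feasible while strictly increasing $f$, contradicting optimality; when the upper bound is tight ($\Rss = \RseJD$), the gain $|\pw^*\hsp|^2$ can still be reduced independently of the ratio, so the optimizer is again extremal in the $\hsp$-direction and therefore lies on $\mathcal{B}^{\pe_2}\Omega$ intersected with the surface $\Rss = \RseJD$. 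Either way the optimizer sits in $\mathcal{S}_2$.

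With boundary membership in hand, I would invoke \textit{Lemma}~\ref{lemma_pg1} to write every candidate as $\pw = \sqrt{P_s}\,\vmax\{\pZ\}$ with the simplex-constrained $\mu_k$ and the piecewise power rule, giving (\ref{eq_prop_sdjd21}) and (\ref{eq_prop_sdjd22}). Here I would emphasize why, unlike \textit{Proposition}~\ref{proposition_sdsd}, the full power-adjustment branch must be retained: with two unintended receivers and $N_t = 2$ one cannot null both $\hsp$ and $\hse$ simultaneously (the only common orthogonal vector is $\pw=0$), so the regimes $\lambda_{\max}(\pZ) \le 0$ genuinely occur when $\mu_3$ is small, and there increasing $P_s$ raises $|\pw^*\hs|^2$ (good for $\Rss$) but also $|\pw^*\hse|^2$ and $|\pw^*\hsp|^2$ (bad for $-\ReMAC$ and $\RppSD$), so full power need not be optimal and $P_s \in [0, \Psmax]$ must be kept as a free variable.

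Finally I would convert the interval constraint into the ratio bounds stated in (\ref{eq_prop_sdjd23}). Writing $\RseSD \le \Rss$ and $\Rss \le \RseJD$ in terms of the gains and clearing denominators turns the first into $(\sigmas + |\hps|^2 P_p)/(\sigmae + |\hpe|^2 P_p) \le |\pw^*\hs|^2/|\pw^*\hse|^2$ and the second into $|\pw^*\hs|^2/|\pw^*\hse|^2 \le (\sigmas + |\hps|^2 P_p)/\sigmae$, which is exactly the double inequality restricting $\mathcal{S}_2$ to the set claimed for $\wopt_2$. Intersecting the boundary characterization $\mathcal{S}_2$ with this ratio interval then yields (\ref{eq_prop_sdjd23}) and completes the argument.
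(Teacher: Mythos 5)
Your proposal follows essentially the same route as the paper's proof: the monotonicity of $\RppSD - \ReMAC + \Rss$ in the three power gains fixes the direction $\pe_2 = [-1\;-1\;+1]$, a domination-by-contradiction argument places the optimizer on the outer boundary $\mathcal{B}^{\pe_2}\Omega$, \textit{Lemma}~\ref{lemma_pg1} then gives the parametrization (\ref{eq_prop_sdjd21})--(\ref{eq_prop_sdjd22}) (with the power-adjustment branch retained, for exactly the reason you give: with $N_t=2$ the two unintended channels $\hsp$ and $\hse$ cannot both be nulled, so $\lambda_{\max}(\pZ)\le 0$ genuinely occurs), and finally the interval $\RseSD \le \Rss \le \RseJD$ is rewritten as the ratio bounds in (\ref{eq_prop_sdjd23}). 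Your algebra for those bounds is correct and matches the statement.

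Where you diverge is in rigor, not in route, and it cuts both ways. You correctly identify the delicate step that the paper's proof passes over in silence: moving toward the boundary in direction $\pe_2$ raises $|\pw^*\hs|^2/|\pw^*\hse|^2$, so the dominating point can violate $\Rss \le \RseJD$, and therefore ``intersect $\mathcal{S}_2$ with the constraint'' is not automatic; the paper simply asserts it. Your slack-case patch is in the right spirit, but note it tacitly needs a path in the gain region along which all three gains move monotonically toward the dominating point (e.g., a straight segment, hence convexity of the gain region); otherwise the objective could dip before it rises and the ``small step'' gives no contradiction. More seriously, your tight-case patch contains a non sequitur: minimizing $|\pw^*\hsp|^2$ at a fixed ratio does not place the point on $\mathcal{B}^{\pe_2}\Omega$, because such a point can still be dominated in direction $\pe_2$ by trading $|\pw^*\hse|^2$ down against $|\pw^*\hs|^2$ up --- precisely the trade that the tight constraint forbids --- so membership in $\mathcal{S}_2$ does not follow for that branch. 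In fairness, the published proof does not close this case either (it never examines the interaction between the boundary argument and the two-sided constraint at all), so your write-up is no weaker than the paper's; its merit is that it makes visible the step that actually requires justification, even if your resolution of it is incomplete.
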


\begin{proof}
Based on \textit{Lemma} \ref{lemma_pg1}, we observe that $\mathcal{S}_2$ in (\ref{eq_prop_sdjd21}) presents the outer boundary points of the power gain region in direction $\pe_2 = [ -1 \; -1 \; +1]$. The proof is similar to that in \textit{Proposition} \ref{proposition_sdsd}. Briefly, if the optimal beamforming vector $\pw_1$ is not on the boundary points of the power gain region in direction $\pe_2$, we can find another beamforming vector $\pw_2$ on the boundary points in direction $\pe_2$ that has $\Rsec(\pw_1) \ge \Rsec(\pw_2)$ or $\Rss(\pw_1) \ge \Rss(\pw_2)$ when $\Rsec = \RppSD - \ReMAC + \Rss$. Note that, unlike \textit{Proposition} \ref{proposition_sdsd} where $P_s = \Psmax$, $\mathcal{S}_2$ includes the boundary points achieved a power adjustment, i.e., $P_s \in [0 \; \Psmax]$. Finally, given $\mathcal{S}_2$, $\wopt_2$ can be acquired in $\mathcal{S}_2$ by adding the constraint as in (\ref{eq_prop_sdjd23}). 
\end{proof}

Note that the case of $\lambda_{\max}(\pZ) \le 0$ occurs when $N_t < 3$ in (\ref{eq_prop_sdjd22}).

\begin{remark}
On the assumption that channels are linearly independent, $\wopt_3$ can be found on the set as follows:
\begin{equation}\label{eq_prop_sdjd32}
\wopt_3 \in \left\{ \pw \left| \pw \in \mathcal{S}_{3}, \;\; \frac{\left|\pw^*\hs \right|^2}{\left| \pw^*\hse \right|^2} \le \frac{\sigmas + \left|\hps\right|^2P_p}{\sigmae  + \left| \hpe \right|^2P_p} \right. \right\},
\end{equation}
where $\mathcal{S}_3$ represents the solution of the optimization problem (\ref{PB1}) with $\Rsec = \RppSD - \RpeJD$ and it can be acquired on the set as follows:
\begin{equation}\label{eq_prop_sdjd31}
\mathcal{S}_3 = \left\{ \pw \left| \sqrt{P_s} \vmax\{\pZ\} \right., \pZ = - \mu_1\hsp\hsp^* + \mu_2\hs\hs^*, \mu_{k} \in \left[0, 1\right], \sum_{k=1}^{2} \mu_k = 1\right\}.
\end{equation}
where $P_s \in [0 \; \Psmax]$.
\end{remark}

In (\ref{eq_prop_sdjd21}) and (\ref{eq_prop_sdjd31}), $P_s \in [0 \; \Psmax]$ implies that the optimal beamforming vector may be achieved with power adjustment. Basically, in our model, the secondary transmitter aims at interrupting the  eavesdropper. When the eavesdropper performs a joint decoding, the interference at the eavesdropper might not be effective due to the joint decoding of the eavesdropper while the primary receiver still suffers from the interference by the secondary transmission. Therefore, decreasing the transmit power at the secondary transmitter is occasionally a better strategy, as shown in (\ref{eq_prop_sdjd21}) and (\ref{eq_prop_sdjd31}).

%%%%%%%%%%%%%%%%%%%%%%

\section{Numerical Results}

%% Fig. SD vs JD %%
In this section, we compare the proposed idea with the case of no eavesdropper, i.e., peaceful system (upper bound), and with the case of no spectrum leasing as a function of $P_p/\sigma_p^2 = P_s/\sigma_s^2$, which, hereinafter, is defined as signal to noise ratio (SNR). For the simulation, we use $0$, $0.5$, and $0.8$ for the required rate constraint, $\alpha$, in (\ref{Rmin}). Note that the case of $\alpha = 0$ means that the secondary transmitter can focus its beam solely on maximizing the secrecy rate and this case can be referred to a helping interferer \cite{Jorswieck10SCC}.

In Figs.~\ref{fig_perf_SDvsJD_SNR_Nt3} and~\ref{fig_perf_SDvsJD_SNR_Nt2}, we show the achievable secrecy rate of the proposed spectrum leasing when $N_t$ is equal to 3 and 2, respectively. In each graph, we compare the cases of different operation of decoding with various $\alpha$.
With a single-user decoding eavesdropper, it is shown that one can reap most of the benefits of the spectrum leasing while still serving the needs of the secondary network. Specifically, we observe that even when $\alpha = 0.5$, the secrecy rate with spectrum leasing is comparable, especially in the low SNR regime. Obviously, the secrecy rate decreases as $\alpha$ increases, and the performance increases as the number of antennas increases. When the eavesdropper performs a joint decoding, the achievable secrecy rate with the proposed spectrum leasing drops compared to the case with a single-user decoding eavesdropper. It is of interest to see that most cases of the proposed idea outperform the case of no spectrum leasing even when the eavesdropper performs a joint decoding. In Fig.~\ref{fig_perf_SDvsJD_SNR_Nt2}, when $\alpha$ is equal to 0.8 and $N_t$ is equal to two, the proposed spectrum leasing fails to outperform the case of no spectrum leasing; indeed, the secondary rate requirement is too high and the number of antennas is insufficient to improve both the primary network and the secondary network. In this case, the cooperation of the primary and secondary transmitter could improve the performance by decreasing $\alpha$. We leave the issue of control of $\alpha$ for future work.

In Fig. \ref{fig_perf_SDvsJD_Nt}, we show the achievable secrecy rate of the proposed spectrum leasing as a function of the number of antennas at the secondary transmitter. It is observed that the performance with larger $N_t$ shows better performance than that with smaller $N_t$. With a single-user decoding eavesdropper, the proposed technique rapidly approaches the rate of a peaceful system. In particular, we observe that, numerically, the proposed idea with more than ten antennas shows a close performance of a peaceful system. When the eavesdropper performs a joint decoding, the achievable secrecy rate is also increased as the number of antennas increases. In this case, however, the performance approaches the performance of a peaceful system more slowly than the case with a single-user decoding eavesdropper. Moreover, the results show that the performance gap between different secondary rate constraints becomes marginal as the number of antennas increases. 

%%%%%%%%%%%%%%%%%%%%%%%%%%%%%%%%%%%%%%%%%%%%%%%%%%%%%
%% Section conclusion
\section{Conclusion} \label{conclusion}

In this correspondence, we studied a new application of a spectrum leasing via cooperation in which the secondary transmission aims at improving the secrecy rate of the primary network. In particular, based on the framework of the power gain region, we proposed an optimal beamforming vector that maximizes the secrecy rate of the primary network while maintaining the rate constraint given to the secondary link. To provide a worst case scenario in terms of secrecy rate, we also investigated the case of joint decoding eavesdropper and it was shown the the proposed idea is still useful. Our work can be extended to designs of cognitive radio networks for physical-layer security in various ways.

\bibliographystyle{IEEEtran}
\bibliography{LKK_KAIST_PhD}
\newpage

\begin{figure}[h]
\centering
\includegraphics[width=15cm]{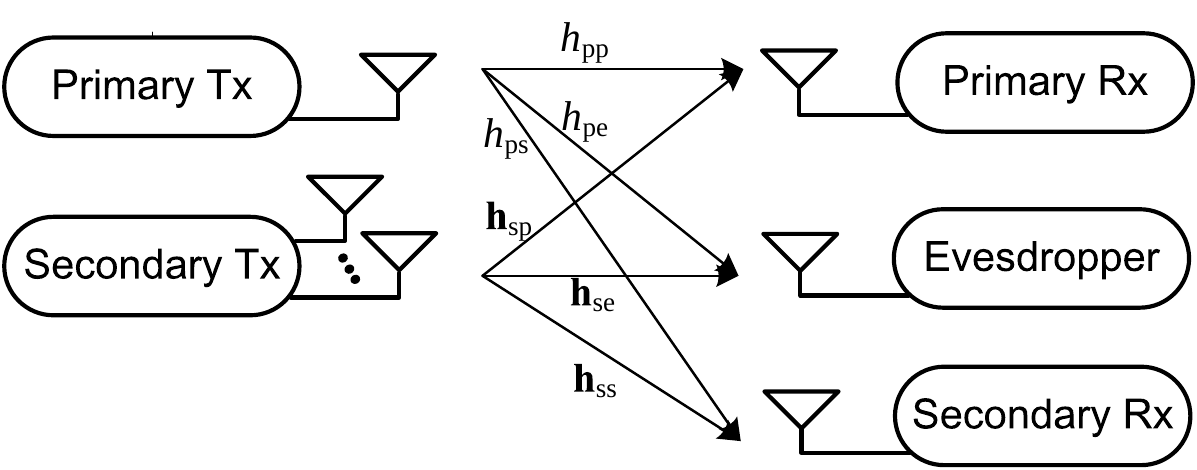}
\caption{System model with proposed spectrum leasing} \label{fig:sm}
\end{figure}

\begin{figure}[h]
\centering
\includegraphics[width=14cm]{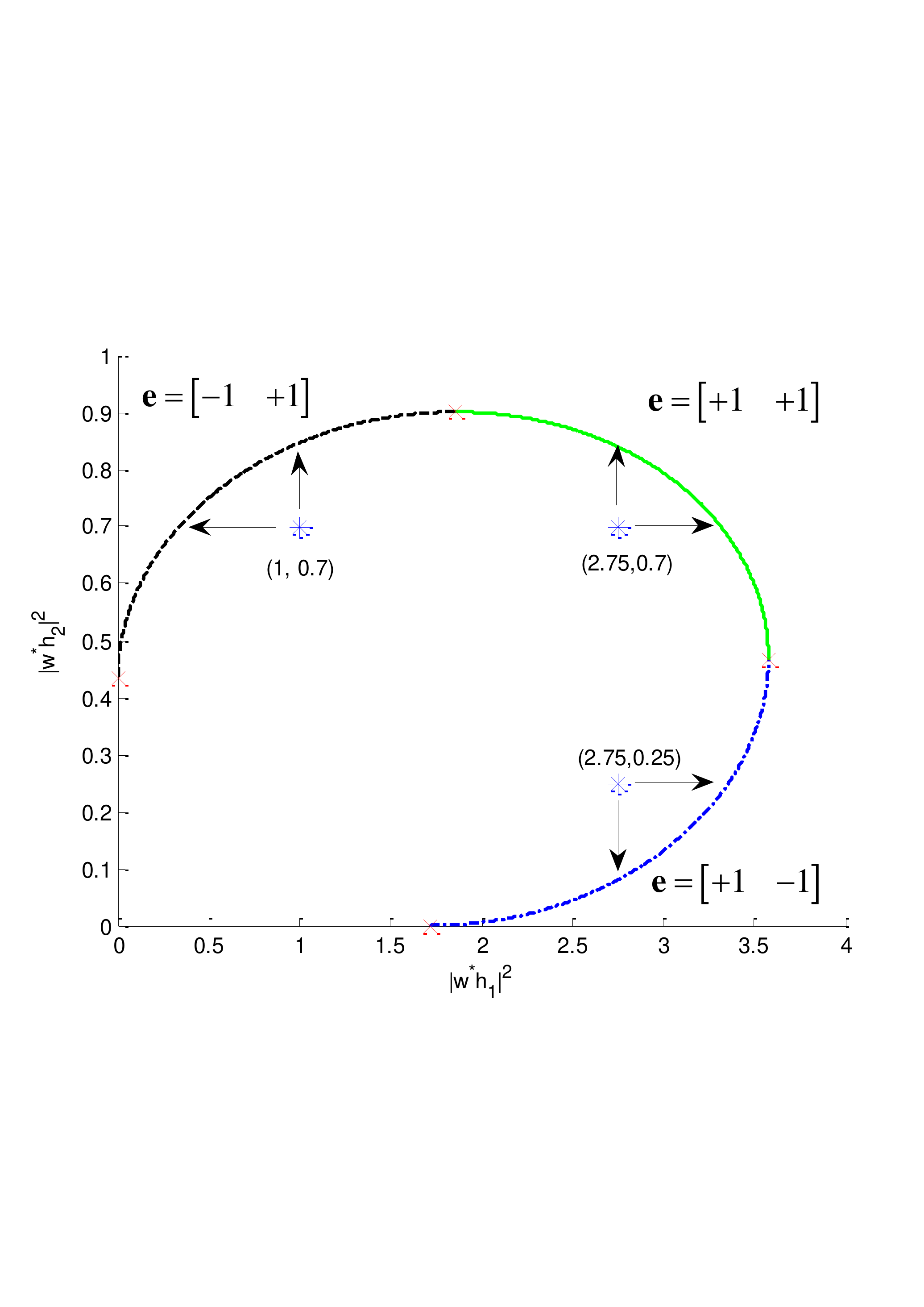}
\caption{An example of power gain region. $(N, K) = (2, 2)$.} \label{fig_PGR}
\end{figure}

\newpage

\begin{figure}[h]
\centering
\includegraphics[width=12cm]{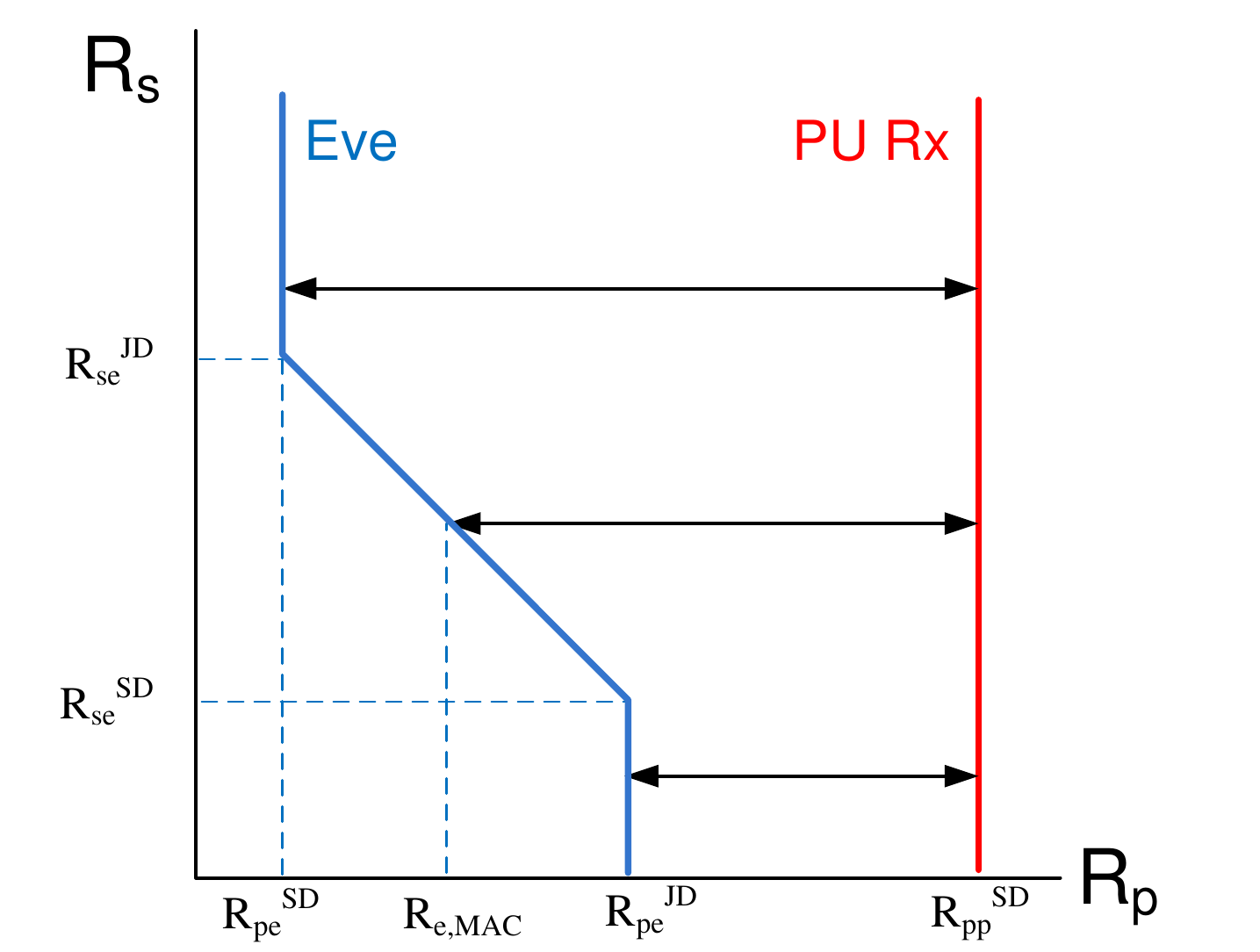}
\caption{Achievable rate region of the primary receiver and the eavesdropper in joint decoding scenario.} \label{fig_rr_sdjd}
\end{figure}

\begin{figure}[h]
\centering
\includegraphics[width=12cm]{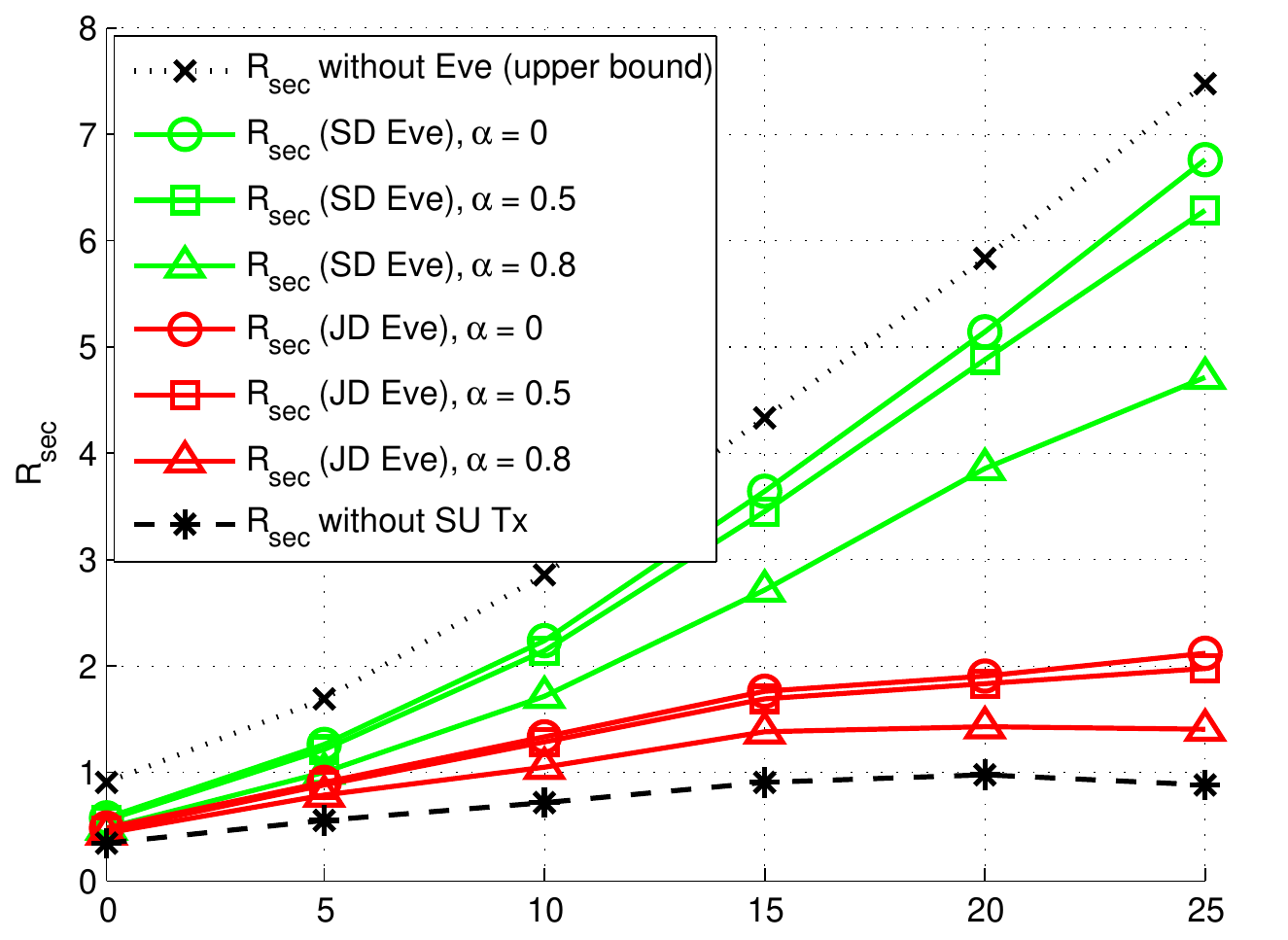}
\caption{Secrecy rate with single-user/joint decoding (SD/JD) eavesdropper when $N_t = 3$.} \label{fig_perf_SDvsJD_SNR_Nt3}
\end{figure}

\begin{figure}[h]
\centering
\includegraphics[width=13cm]{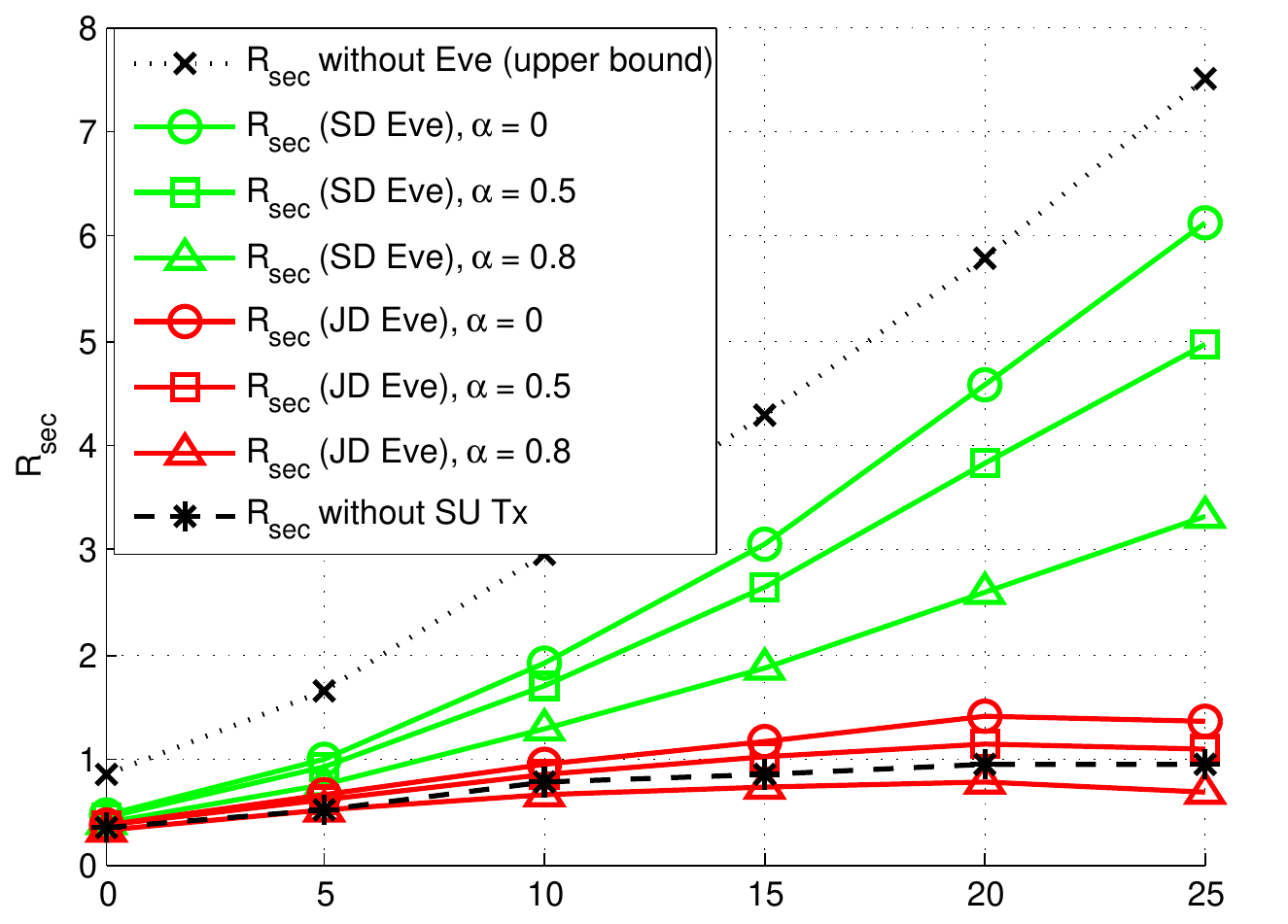}
\caption{Secrecy rate with single-user/joint decoding (SD/JD) eavesdropper when $N_t = 2$.} \label{fig_perf_SDvsJD_SNR_Nt2}
\end{figure}

\begin{figure}[h]
\centering
\includegraphics[width=13cm]{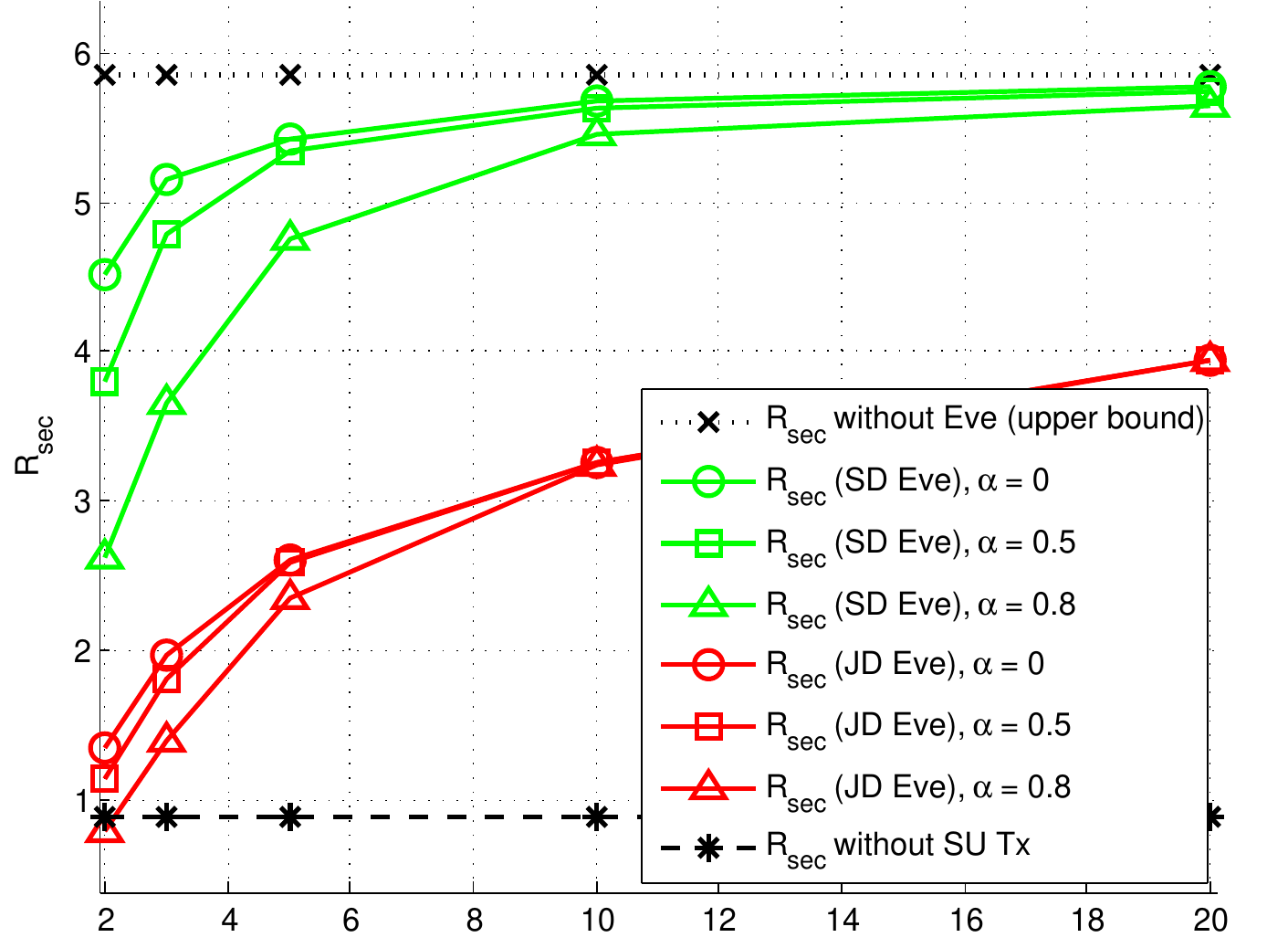}
\caption{Secrecy rate with single-user/joint decoding (SD/JD) eavesdropper when SNR $= 20$(dB).} \label{fig_perf_SDvsJD_Nt}
\end{figure}

\end{document}